\newcommand{\coloneqq}{\mathrel{\mathop:}=}
\let\emptyset\varnothing
\let\setminus\smallsetminus
\newtheorem{definition}{Definition}
\newtheorem{lemma}{Lemma}
\newtheorem{theorem}{Theorem}
\newtheorem{remark}{Remark}
\newcommand{\DPP}{\operatorname{DPP}}
\newcommand{\tw}{\operatorname{tw}}
\newcommand{\pw}{\operatorname{pw}}
\newcounter{claimcounter}
\begin{document}

% Bei Elsevier muß der Titel nach \begin{document} stehen.
\begin{frontmatter}
\title{A lower bound on the tree-width of graphs with irrelevant vertices\footnote{This version has been created by modifying a preprint submitted to Elsevier; modifications are minor and were done purely to work around lack of support for xelatex on arXiv}}
\author[leeds]{Isolde Adler}
\ead{I.M.Adler@leeds.ac.uk}
\author[freiburg]{Philipp Klaus Krause}
\ead{krauseph@informatik.uni-freiburg.de}
\address[leeds]{School of Computing, University of Leeds}
\address[freiburg]{Albert-Ludwigs-Universit{\"a}t Freiburg}
\begin{abstract}
	For their famous algorithm for the disjoint paths problem,
	Robertson and Seymour proved 
	that there is a function $f$ such that if the tree-width of a graph $G$ 
	with $k$ pairs of terminals is at least $f(k)$, then $G$ contains a solution-irrelevant vertex (Graph Minors. XXII., JCTB 2012).
	We give a single-exponential lower bound on $f$. This bound even holds for planar graphs.
\end{abstract}
\begin{keyword}
disjoint paths problem \sep irrelevant vertex \sep vital linkage \sep unique linkage \sep planar graph \sep tree-width
\end{keyword}
\end{frontmatter}

\section{Introduction}
The {\sc Disjoint Paths Problem} is
one of the famous classical problems in the area of graph algorithms. 
Given a graph $G$, and $k$ pairs of terminals, $(s_1,t_1), \ldots, (s_k,t_k)$,
it asks whether $G$ contains $k$ vertex-disjoint paths $P_{1}, \ldots, P_{k}$ such that $P_{i}$ connects $s_{i}$ to $t_{i}$, (for $i=1, \ldots, k$). 
Karp proved that the problem is NP-hard in general~\cite{Karp75} and Lynch proved that it remains NP-hard on planar graphs~\cite{Lynch75}. 
Robertson and Seymour showed that it can be solved in time $g(k)\cdot |V(G)|^3$ for some computable function $g$, i.\,e.\ the problem is fixed-parameter tractable (and, in particular, solvable in polynomial time for fixed $k$).
For a recursive step in their algorithm 
((10.5) in~\cite{GMXIII}), they prove~\cite{GMXXII} that there is a function $f\colon\mathbb N\to\mathbb N$
such that if a graph $G$ with $k$ pairs of terminals has tree-width at least $f(k)$, then 
$G$ contains a vertex that is \emph{irrelevant} to the solution, i.\,e.\
$G$ contains a non-terminal vertex $v$ such that $G$ has a solution if and only if
the graph $G - v$ (with the same terminals) has a solution.

In this paper we give a lower bound on $f$, showing that $f(k) \geq 2^k$, even for planar graphs.
For this we construct a family of planar input graphs $(G_k)_{k \geq 2}$, each with $k$ pairs of terminals,
such that the tree-width of $G_k$ is $2^k - 1$, and every member of the family has a unique solution to the {\sc Disjoint Paths Problem}, 
where the paths of the solution use all vertices of the graph. Hence no vertex of $G_k$ is irrelevant. As a corollary, we obtain 
a lower bound of $2^k - 1$ on the tree-width of graphs having \emph{vital linkages}~(also called \emph{unique linkages})~\cite{GMXXI} with $k$ components.\footnote{This result appeared in the last section of a conference paper~\cite{AdlerKKLST11}. While the main focus of the paper~\cite{AdlerKKLST11} was a single exponential upper bound on $f$ on planar graphs, it only sketches the lower bound. Here we provide the full proof of the lower bound. A longer proof of the lower bound can also be found in the thesis~\cite{Krause2016}.}
Our result contrasts the polynomial upper bound in a related topological setting~\cite{Matousek2016}, where two systems of curves are untangled on a sphere with holes.

For planar graphs, an upper bound of $f(k) \leq 72\sqrt{2}k^{\frac{3}{2}} \cdot 2^k$ was given in~\cite{AdlerKKLST11}. An elementary proof for a bound of $f(k) \leq (72k \cdot 2^k - 72 \cdot 2^k + 18)\lceil \sqrt{2k + 1} \rceil$ was provided later~\cite{Krause2016} as well as a slightly improved bound of $f(k) \leq 26k \cdot 2 ^{\frac{3}{2}} \cdot 2^k$ requiring a slightly more involved proof~\cite{AdlerKKLST17}. Our lower bound shows that this is asymptotically optimal.
Recently, an explicit upper bound on $f$ on graphs of bounded genus~\cite{GeelenHR18} was found, then refined into one that is single exponential in $k$ and the genus~\cite{Mazoit13}. The exact order of growth of $f$ on general graphs is still unknown.

\section{Preliminaries}

Let $\mathbb N$ denote the set of all non-negative integers. For $k\in \mathbb N$, we let $[k] \coloneqq \{1, \ldots, k\}$. For a set $S$ we let $2^S$ denote the power set of $S$.
A \emph{graph} $G = (V, E)$ is a pair of a set of \emph{vertices} $V$ and a set of \emph{edges} $E \subseteq \{e\ |\ e \in 2^V, |e| = 2\}$, i.\,e.\ graphs are undirected and simple. For an edge $e = \{x, y\}$, the vertices $x$ and $y$ are called \emph{endpoints} of the edge $e$, and the edge is said to be between its endpoints. For a graph $G = (V, E)$ let $V(G) \coloneqq V$ and $E(G) \coloneqq E$.
Let $H$ and $G$ be graphs. The graph $H$ is a \emph{subgraph} of $G$ (denoted by $H \subseteq G$), if $V(H) \subseteq V(G)$ and $E(H) \subseteq E(G)$. For a set $X \subseteq V(G)$, the subgraph of $G$ \emph{induced by} $X$ is the graph $G[X] \coloneqq (X, \{e \in E(G)\ |\ e \subseteq X\})$ and we let $G - v \coloneqq G[V(G) \setminus \{v\}]$.

A \emph{path} $P$ in a graph $G = (V, E)$ is a sequence $n_0, \ldots, n_k \in V$ of pairwise distinct vertices of $G$, such that for every $i \in \{0, \ldots, k-1\}$ there is an edge $\{n_i, n_{i + 1}\} \in E$. The vertices $n_0$ and $n_k$ are called \emph{endpoints} of $P$. The path $P$ is called a path \emph{from} $n_0$ \emph{to} $n_k$ (i.\,e.\ paths are \emph{simple}). We  sometimes identify the path $P$ in $G$ with the subgraph $(\{n_0, \ldots, n_k\}, \{\{n_0, n_1\}, \ldots, \{n_{k - 1}, n_k\}\})$ of~$G$.
A graph $G$ is called \emph{connected}, if it has at least one vertex and for any two vertices $x, y \in V(G)$, there is a path from $x$ to $y$ in $G$. The inclusion-maximal connected subgraphs of a graph are called \emph{connected components} of the graph.
For $A, B \subseteq V(G)$, a set $S \subseteq V(G)$ separates $A$ from $B$, if there is no path from a vertex in $A$ to a vertex in $B$ in the subgraph of $G$ induced by $V(G) \setminus S$.
A \emph{tree} is a non-empty graph $T$, such that for any two vertices $x, y \in V(T)$ there is exactly one path from $x$ to $y$ in $T$.

\begin{figure}[h]
	\centering
	\scalebox{1.7}{\includegraphics{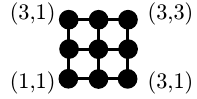}}
	\caption{$(3 \times 3)$-grid\label{ExGrid2}}
\end{figure}

%\begin{definition}[Grid]
	Let $m,n\in \mathbb N \setminus \{0\}$. The $(m \times n$)-\emph{grid}\index{grid} is a graph $H = (V, E)$ with $V \coloneqq [m] \times [n]$ and $E \coloneqq \{\{(y, x), (w, z)\}\ |\ (y, x) \in V, (w, z) \in V, |x - z| + |y - w| = 1\}$. In case of a square grid where $m = n$, we say that $n$ is the \emph{size}\index{size of a grid} of the grid. An edge $\{(y, x), (w, z)\}$ in the grid is called \emph{horizontal}\index{horizontal edge}, if $y = w$, and \emph{vertical}\index{vertical edge}, if $x = z$.
%\end{definition}
See Figure~\ref{ExGrid2} for the $(3 \times 3)$-grid.

A \emph{drawing}\index{drawing} of a graph $G$ is a representation of $G$ in the Euclidean plane $\mathbb R^2$, where vertices are represented by distinct points of $\mathbb R^2$ and edges by simple curves joining the points that correspond to their endpoints, such that the interior of every curve representing an edge does not contain points representing vertices.
A \emph{planar drawing}\index{planar drawing} (or \emph{embedding}\index{planar embedding}) is a drawing, where the interiors of any two curves representing distinct edges of $G$ are disjoint.
A graph $G$ is \emph{planar}\index{planar graph}, if $G$ has a planar drawing (See~\cite{MoharTh01} for more details on planar graphs).
A \emph{plane graph}\index{plane graph} is a planar graph $G$ together with a fixed embedding of $G$ in $\mathbb R^2$.
We will identify a plane graph with its image in $\mathbb R^2$.
Once we have fixed the embedding, we will also identify a planar graph with its image in $\mathbb R^2$.

\begin{definition}[Disjoint Paths Problem ($\DPP$)]
	Given a graph $G$ and $k$ pairs of \emph{terminals} $(s_{1},t_{1})\in V(G)^{2}, \ldots, (s_{k},t_{k})\in V(G)^{2}$, the {\sc Disjoint Paths Problem} is the problem of deciding whether $G$ contains $k$ vertex-disjoint paths $P_{1}, \ldots, P_{k}$ such that $P_{i}$ connects $s_{i}$ to $t_{i}$ (for $i \in [k]$).
	If such paths $P_1, \ldots, P_k$ exist, we refer to them as a \emph{solution}.
	We denote an \emph{instance}\index{instance of the disjoint paths problem} of $\DPP$ by $G, (s_1, t_1), (s_2, t_2), \ldots, (s_k, t_k)$.
\end{definition}

Let $G, (s_1, t_1), \ldots, (s_k, t_k)$ be an instance of $\DPP$. A non-terminal vertex $v\in V(G)$ is \emph{irrelevant}\index{irrelevant vertex}, if $G, (s_1, t_1), \ldots, (s_k, t_k)$ has a solution if and only if $G - v, (s_1, t_1), \ldots, (s_k, t_k)$ has a solution.

%\begin{definition}[Tree-Decomposition]
	A \emph{tree-decomposition}\index{tree-decomposition} of a graph $G$ is a pair  $(T, \chi)$, consisting of a tree $T$ and a mapping $\chi\colon V(T)\to 2^{V(G)}$, such that for each $v \in V(G)$ there exists $t \in V(T)$ with $v \in \chi(t)$, for each edge $e \in E(G)$ there exists a vertex $t \in V(T)$ with $e \subseteq \chi(t)$, and for each $v \in V(G)$ the set $\{t \in V(T) \mid v \in \chi(t)\}$ is connected in $T$.
	The \emph{width} of a tree-decomposition $(T,\chi)$ is
	\begin{displaymath}
		\operatorname{w}(T,\chi) \coloneqq \max\left\{\left|\chi(t)\right|-1\ \middle|\ t\in V(T)\right\}.
	\end{displaymath}
	If $T$ is a path, $(T, \chi)$ is also called a \emph{path-decomposition}\index{path-decomposition}.
	The \emph{tree-width of $G$} is
	\begin{displaymath}
		\tw(G) \coloneqq \min\left\{\operatorname{w}(T,\chi)\ \middle|\ (T,\chi) \text{ is a tree-decomposition of }G\right\}.
	\end{displaymath}

	The \emph{path-width of $G$} is
	\begin{displaymath}
		\pw(G) \coloneqq \min\left\{\operatorname{w}(T,\chi)\ \middle|\ (T,\chi) \text{ is a path-decomposition of }G\right\}.
	\end{displaymath}
%\end{definition}

Obviously, every graph $G$ satisfies $\pw(G)\geq \tw(G)$.
Every tree has tree-width at most $1$ and every path has path-width at most $1$. It is well known that the $(n \times n$)-grid has both tree-width and path-width $n$.
Moreover, if $H\subseteq G$, then $\tw(H)\leq \tw(G)$ and $\pw(H)\leq \pw(G)$.

\begin{theorem}[Robertson and Seymour \cite{GMXXII}]\label{thm:irr}
	There is a function $f\colon\mathbb N\to\mathbb N$ such that if $\tw(G) \geq f(k)$, 
	then $G, (s_1, t_1), \ldots, (s_k, t_k)$ has an irrelevant vertex\index{irrelevant vertex} (for any choice of 
	terminals $(s_1, t_1), \ldots, (s_k, t_k)$ in $G$).
\end{theorem}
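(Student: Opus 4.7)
The plan is to prove the statement in two major stages: first reduce to the case where $G$ contains a very large grid-like structure, and then within that structure locate a vertex whose deletion cannot affect the solvability of the $\DPP$ instance. For the first stage I would invoke the Excluded Grid Theorem of Robertson and Seymour, which supplies a function $h\colon\mathbb N\to\mathbb N$ such that $\tw(G)\ge h(r)$ forces $G$ to contain the $(r\times r)$-grid as a minor. Setting $r$ sufficiently large in terms of $k$ (and ultimately defining $f(k)\coloneqq h(r(k))$) guarantees an enormous grid minor inside $G$.

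The second stage is the heart of the argument. A bare grid minor is not directly useful, because the paths of a hypothetical solution may wind through it in uncontrolled ways. I would therefore refine the grid minor to a \emph{flat wall}: a subdivision $W$ of a wall such that the rest of $G$ attaches to $W$ only through its outer face, apart from a bounded number of exceptional \emph{apex} vertices. This is the content of the Flat Wall Theorem. By choosing $r$ large enough to absorb both $k$ and the apex budget, I can delete the apices and still leave a very large flat wall $W'$ inside $G$ whose interior sits in a planar disc $D\subseteq\mathbb R^{2}$ disjoint from all terminals.

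I would then pick a vertex $v$ in the combinatorial centre of $W'$, far from its outer boundary in the grid metric, and claim that $v$ is irrelevant. The direction $G-v\Rightarrow G$ is immediate. For the converse, suppose $G$ admits a solution $\mathcal P=(P_{1},\dots,P_{k})$. Because $W'$ is flat and attached to the rest of $G$ only through its outer boundary, each $P_{i}$ enters and leaves the disc $D$ in a bounded number of arcs, and inside $D$ these arcs are simply curves on a planar domain with a fixed cyclic pattern of endpoints on $\partial D$. I would then reroute every arc that passes through $v$ by using the concentric cycles and radial paths supplied by the deep interior of $W'$: cut the arc where it first enters and last leaves a small combinatorial disc around $v$, and replace the interior piece by a detour along a concentric cycle chosen far enough from $v$ to avoid the finitely many other arcs. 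Because $D$ is planar and the boundary pattern is preserved, the detoured arcs can be chosen pairwise disjoint, yielding a solution avoiding $v$ and hence one in $G-v$.

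The main obstacle is making this rerouting rigorous in a \emph{non-planar} ambient graph: the flat wall gives a planar neighbourhood only up to the apex vertices, and a solution path may shuttle many times between $W'$ and the rest of $G$, so one needs to control the trace of each $P_{i}$ on $\partial D$ and argue that the cyclic pattern of attachments admits a planar rerouting. This is exactly what the Unique Linkage / Vital Linkage machinery of Graph Minors XXI--XXII is designed to handle: it shows that a sufficiently deep flat wall forces any vital linkage to leave the wall's interior unused, so the offending paths can indeed be replaced. Quantifying how deep the wall must be in terms of $k$ drives the final estimate on $f$; it is this estimate that is non-elementary in the original proof, and the question of its true order of magnitude is precisely what motivates the lower bound proved in the remainder of this paper.
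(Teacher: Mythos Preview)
The paper does not prove this theorem. Theorem~\ref{thm:irr} is stated as a result of Robertson and Seymour with a citation to \cite{GMXXII}, and no proof is given in the present paper; it is used only as background to motivate the lower-bound construction that follows. There is therefore nothing in this paper to compare your proposal against.

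For what it is worth, your sketch does broadly track the architecture of the Robertson--Seymour argument (Excluded Grid Theorem, Flat Wall Theorem, then the rerouting made possible by the Unique Linkage Theorem), and you correctly identify that the hard step is the rerouting in the non-planar setting. But what you have written is an outline, not a proof: the claim that ``a detour along a concentric cycle chosen far enough from $v$'' can be made disjoint from the other arcs is exactly the substance of the Unique Linkage Theorem, and you ultimately defer to that machinery rather than supply the argument. Since the present paper treats Theorem~\ref{thm:irr} as a black box and is concerned only with lower-bounding $f$, a self-contained proof is neither expected nor provided here.
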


%\begin{definition}[Linkage]
	A \emph{linkage}\index{linkage} in a graph $G$ is a subgraph $L \subseteq G$, such that each connected component of $L$ is a path.
	The \emph{endpoints}\index{endpoints of a linkage} of a linkage $L$ are the endpoints of these paths, and the \emph{pattern}\index{pattern of a linkage} of $L$ is the matching on the endpoints induced by the paths, i.\,e.\ the pattern is the set
	\begin{displaymath}
		\left\{ \{s,t\}\ \middle|\ L\text{ has a connected component that is a path from $s$ to $t$}\right\}.
	\end{displaymath}
	A linkage $L$ in a graph $G$ is a \emph{vital linkage}\index{vital linkage} in $G$, if $V(L) = V(G)$ and there is no other linkage $L'\neq L$ in $G$ with the same pattern as $L$.
%\end{definition}

\begin{theorem}[Robertson and Seymour \cite{GMXXII}]\label{thm:vital}
	There are functions $g,h\colon\mathbb N\to\mathbb N$ such that if a graph $G$ has a vital linkage\index{vital linkage} with $k$ components then $\tw(G)\leq g(k)$ and $\pw(G)\leq h(k)$.
\end{theorem}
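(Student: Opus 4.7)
My plan is to prove the tree-width bound as a direct, one-line consequence of Theorem~\ref{thm:irr}, and to treat the path-width bound as the genuinely harder half. Let $G$ have a vital linkage $L$ with $k$ components, where the $i$-th component is a path from $s_i$ to $t_i$. Then $L$ itself is a solution of the $\DPP$ instance $G, (s_1,t_1), \ldots, (s_k,t_k)$.

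Suppose, for contradiction, that $\tw(G) \geq f(k)$, where $f$ is the function from Theorem~\ref{thm:irr}. Then this instance admits an irrelevant vertex $v$, which by definition is non-terminal. Because $L$ is a solution, $G - v$ also has a solution $L'$, which we view as a linkage in $G - v \subseteq G$ with the same pattern as $L$. Since $V(L) = V(G)$, we have $v \in V(L)$, but $v \notin V(L')$, so $L' \neq L$. This contradicts the vitality of $L$. Hence $\tw(G) \leq f(k) - 1$, and we may take $g(k) \coloneqq f(k) - 1$.

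For the path-width bound the same argument yields nothing, because $\pw$ cannot in general be bounded by any function of $\tw$ (complete binary trees witness this). The main obstacle is therefore to exclude a second kind of obstruction, one that appears in graphs of large path-width but not in graphs of large tree-width, namely a suitably deep and branching ``caterpillar''. My plan would be to combine the vital linkage with such an obstruction and show that one of the $k$ paths can be locally rerouted through it, producing a different linkage with the same pattern and contradicting vitality. This rerouting is the technical heart of the argument, and it is essentially what Robertson and Seymour carry out in~\cite{GMXXII}; I would invoke their structural results rather than redevelop them from scratch.
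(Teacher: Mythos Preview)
The paper does not prove this theorem at all: it is quoted from Robertson and Seymour~\cite{GMXXII} and used as background. There is therefore no ``paper's own proof'' to compare against.

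On the substance of your proposal: the tree-width half is formally correct as stated, but it is circular relative to the actual mathematics. In~\cite{GMXXII} (and in every known proof of Theorem~\ref{thm:irr}), the irrelevant-vertex theorem is \emph{derived from} the vital/unique linkage theorem, not the other way round; the function $f$ exists only because $g$ and $h$ have already been shown to exist. So writing $g(k) \coloneqq f(k) - 1$ is valid as a formal implication between black-box statements, but it does not constitute an independent proof of Theorem~\ref{thm:vital}. For the path-width half you essentially concede this and point back to~\cite{GMXXII}, which is the honest thing to do, but then the whole proposal amounts to ``cite~\cite{GMXXII}'', which is exactly what the paper already does.
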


\section{The lower bound}

Our main result is the following.

\begin{theorem}\label{thm:lowerbound}
	Let $f,g,h\colon\mathbb N\to\mathbb N$ be as in Theorems~\ref{thm:irr} and~\ref{thm:vital}.
	Then $f(k) \geq 2^k$, $g(k)\geq 2^k-1$, and $h(k)\geq 2^k-1$. Moreover, this holds even if we consider planar graphs only.
\end{theorem}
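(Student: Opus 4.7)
The plan is to construct, for each $k\geq 2$, a planar graph $G_k$ together with $k$ pairs of terminals $(s_1,t_1),\ldots,(s_k,t_k)$ having three properties: the induced $\DPP$ instance admits a unique solution $L_k$; the linkage $L_k$ covers every vertex of $G_k$; and $G_k$ contains a $(2^k-1)\times(2^k-1)$-grid as a subgraph, so that $\tw(G_k)\geq 2^k-1$ and $\pw(G_k)\geq 2^k-1$. Once such a family is in hand, all three bounds of the theorem follow. Because every vertex lies on some component of the unique solution, no non-terminal vertex of $G_k$ can be irrelevant, and the contrapositive of Theorem~\ref{thm:irr} gives $f(k)>\tw(G_k)\geq 2^k-1$, i.\,e.\ $f(k)\geq 2^k$. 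Since $L_k$ is a vital linkage with $k$ components, Theorem~\ref{thm:vital} combined with the lower bounds on $\tw(G_k)$ and $\pw(G_k)$ yields $g(k)\geq 2^k-1$ and $h(k)\geq 2^k-1$.

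I would build $G_k$ recursively via a doubling operation. The base case is a small explicit planar graph $G_2$ of tree-width $3$ whose two terminal pairs force a vital unique linkage; this is verified by direct case analysis. For the inductive step, $G_{k+1}$ is obtained from two disjoint copies of $G_k$ glued along a carefully designed interface region, together with one new terminal pair $(s_{k+1},t_{k+1})$. The interface gadget is engineered to serve two purposes simultaneously: first, the component $P_{k+1}$ routed between $s_{k+1}$ and $t_{k+1}$ in any $\DPP$ solution is forced to visit every newly added interface vertex; second, after deleting $V(P_{k+1})$ from $G_{k+1}$ the remainder splits into the two copies of $G_k$, each equipped with its original $k$ terminal pairs. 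Planarity is preserved by performing the gluing inside a face common to the two embedded copies, and the identity $2(2^k-1)+1=2^{k+1}-1$ allows one to exhibit a grid of side length $2^{k+1}-1$ in $G_{k+1}$, assembled from the grids of the two copies and the interface vertices.

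The main technical obstacle is the uniqueness of the $\DPP$ solution, which I would prove by induction on $k$. Given any solution $L$ on $G_{k+1}$ and its component $P_{k+1}$, completing $P_{k+1}$ by an arc through the outer face produces a Jordan curve that separates the two copies of $G_k$; because the remaining terminals have been placed symmetrically in those copies, the partition of terminals forced by this curve is exactly the one dictated by the construction. A local argument at the interface—using that each interface vertex has only a small, controlled number of neighbours outside the gadget—then shows that $P_{k+1}$ must use every interface vertex, so that removing $P_{k+1}$ exposes two disjoint instances, each isomorphic to the $\DPP$ instance on $G_k$. The induction hypothesis pins down the remaining $k$ components uniquely, and vitality follows because the vertex count of $G_{k+1}$ matches the total length of the forced paths by construction. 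The delicate step is the local forcing of $P_{k+1}$ at the interface: the gadget must be narrow enough that the Jordan-curve separation together with degree bookkeeping leaves only one legal planar route for $P_{k+1}$, while still being rich enough that two full copies of $G_k$ reappear after removing $P_{k+1}$. Aligning these competing requirements is the key creative design decision in the construction.
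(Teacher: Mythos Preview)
Your overall strategy---build a planar family with a vital linkage and a large grid, then read off the bounds---matches the paper. But the specific inductive construction you sketch does not hold together, and this is a genuine gap rather than a matter of presentation.

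You propose that $G_{k+1}$ consist of two disjoint copies of $G_k$ glued along an interface, with one new terminal pair, and that removing $P_{k+1}$ leaves ``the two copies of $G_k$, each equipped with its original $k$ terminal pairs''. Count terminals: $G_{k+1}$ is supposed to carry exactly $k+1$ pairs, so after deleting $P_{k+1}$ only $k$ pairs remain. They cannot populate \emph{two} full $G_k$ instances, which would require $2k$ pairs. Your Jordan-curve argument in fact forces every remaining pair $(s_i,t_i)$ to lie on one side of $P_{k+1}$, so the $k$ pairs split as $j$ versus $k-j$ between the two copies; neither side sees the full $G_k$ instance, and the induction hypothesis does not apply. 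If instead you put all $k$ old pairs into one copy, the other copy has no terminals at all, and its vertices are not covered by the linkage---so the linkage is not vital. The identity $2(2^k-1)+1=2^{k+1}-1$ handles the grid, but it does not solve the linkage side of the construction.

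The paper's recursion is different in kind: $G_{k+1}$ is obtained from a single \emph{wider} copy $G_{k,2p}$ (with $p=2^{k+1}-1$) by adjoining one new path of $p$ vertices folded underneath; there is no second copy of $G_k$. Uniqueness is then argued topologically via a disc-with-edges model $X_k$: one shows that any solution on $X_{k+1}$, after filling in the discs bounded by the new edges, induces a solution on $X_k$; by induction this fixes how $P_1,\ldots,P_k$ traverse the old edges, and a counting argument ($2^k-1$ remaining curve segments versus $2^k-1$ new edges) together with planarity forces each segment through exactly one new edge in a prescribed order. Finally a short Menger argument rules out vertical edges inside the grid. If you want to salvage your doubling idea, you would need the number of terminal pairs to double as well, which changes the parameter and breaks the statement you are trying to prove.
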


\begin{figure}
	\centering
	\scalebox{1.7}{\includegraphics{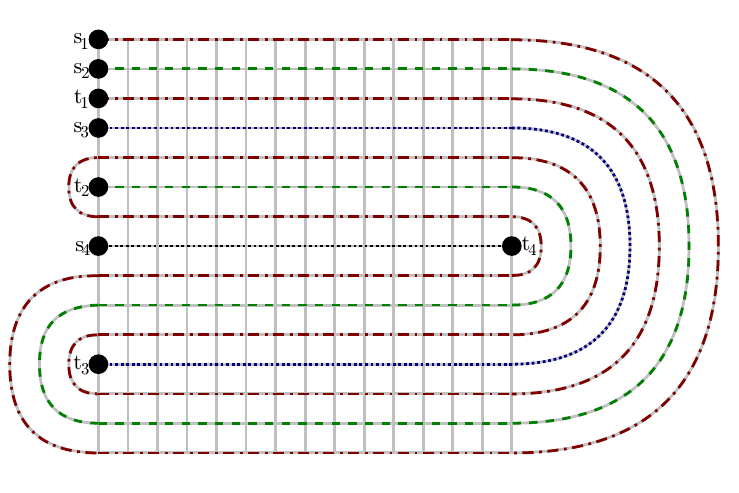}}
	\caption{$G_4, (s_1, t_1), (s_2, t_2), (s_3, t_3), (s_4, t_4)$ with solution.\label{ExGridS}}
\end{figure}

In our proof we construct a family of graphs $G_k, k \geq 1$, of tree-width and path-width $\geq 2^k - 1$, and with a vital linkage with $k$ components. Figure~\ref{ExGridS} shows the graph $G_4$.

\begin{figure}
	\centering
	\scalebox{1.7}{\includegraphics{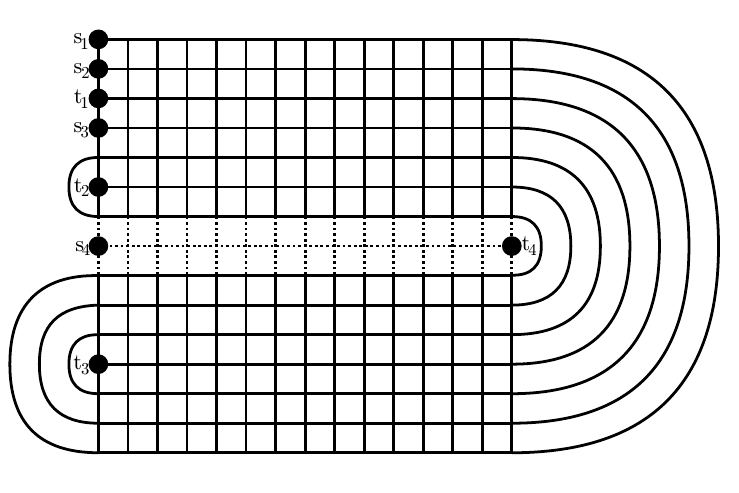}}
	\caption{The construction of $G_4 = G_{4, 15}$ from $G_{3, 30}$.\label{ExG4Def}}
\end{figure}

\begin{definition}[The graph $G_k$]
	Let $k, p \in \mathbb{N} \setminus \{0\}$. We inductively define an instance
	$G_k, (s_1, t_1), \ldots, (s_k, t_k)$ of $\DPP$ as follows.

	The Graph $G_{1, p}$ is the path $x_1, x_2, \ldots, x_p$ with $p$ vertices, $s_1(G_{1, p}) \coloneqq x_1$, $t_1(G_{1, p}) \coloneqq x_p$. The \emph{bottom row} and the \emph{top row} of $G_{1, p}$ are the graph $G_{1, p}$ itself.

        We define the graph $G_{k + 1, p}$ by adding a path $y_1, y_2, \ldots, y_p$ with $p$ vertices to $G_{k, 2p}$ as follows.  Let $x_1, x_2, \ldots, x_{2p}$ be the bottom row of $G_{k, 2p}$ and let $z_1,z_2, \ldots, z_{2p}$ be the top row of $G_{k, 2p}$. Let 
	\begin{align*}
		V(G_{k + 1, p}) \coloneqq V(G_{k, 2p}) &\cup \{y_1, y_2, \ldots, y_p\},\\
		E(G_{k + 1, p}) \coloneqq E(G_{k, 2p}) &\cup \left\{\{y_i, y_{i + 1}\}\ \middle|\ 1 \leq i  < p \right\} \cup\\
			&\left\{\{y_i, x_i\}, \{y_i, x_{2p - i + 1}\}\ \middle|\ 1 \leq i \leq p \right\}.
	\end{align*}
	We set $s_{k + 1}(G_{k + 1, p}) \coloneqq y_1$, $t_{k + 1}(G_{k + 1, p}) \coloneqq y_p$ and $s_i(G_{k + 1, p}) \coloneqq s_i(G_{k, p})$, $t_i(G_{k + 1, p}) \coloneqq t_i(G_{k, p})$ for $1 \leq i \leq k$. The \emph{top row} of $G_{k + 1, p}$ is $z_{1}, \ldots, z_{p}$ and the \emph{bottom row} of $G_{k + 1, p}$ is $z_{2p}, \ldots, z_{p + 1}$.

	Let $G_k \coloneqq G_{k, 2^k - 1}$. We define the $\DPP$ instance $G_k, (s_1, t_1), \ldots, (s_k, t_k)$ as $G_k, (s_1(G_k), t_1(G_k)), \ldots, (s_k(G_k), t_k(G_k))$.
\end{definition}

	Figure~\ref{ExG4Def} shows the construction of $G_4 = G_{4, 15}$ from $G_{3, 30}$.

\begin{remark}\label{rem:tree-width}
	By construction, the graph $G_{k}$ contains a $((2^k - 1) \times (2^k - 1))$-grid as a subgraph. The tree-width and path-width of $G_k$ are thus at least $2^k - 1$.
\end{remark}

\begin{remark}\label{rem:linkage}
	By construction, the graph $G_{k}$ contains a linkage (because in each step we add a path linking a new terminal pair).
\end{remark}

We will now show that this linkage is vital by considering a topological version.

\begin{definition}[Topological DPP]\label{dfn:top-DPP}
	Given a subset $X$ of the plane and $k$ pairs of terminals $(s_{1},t_{1})\in X^{2}, \ldots, (s_{k},t_{k})\in X^{2}$ the {\sc topological Disjoint Paths Problem} is the problem of deciding whether there are $k$ pairwise disjoint curves in $X$, such that each curve $P_i$ is homeomorphic to $[0, 1]$ and its ends are $s_i$ and $t_i$.
	If such curves $P_1, \ldots, P_k$ exist, we refer to them as a \emph{solution}.
	We denote an \emph{instance} of the topological Disjoint Paths Problem by $X, (s_1, t_1), (s_2, t_2), \ldots, (s_k, t_k)$.
\end{definition}

	A \emph{disc-with-edges} is a subset $X$ of the plane containing a closed disc $D$ such that the connected components of $X \setminus D$, called \emph{edges}, are homeomorphic to open intervals $(0, 1)$. We now define a family $(X_k)_{k \in \mathbb{N} \setminus \{0\}}$ of discs-with-edges together with terminals. These will be used as instances
	of the topological $\DPP$. Figure~\ref{Disc-with-edges} illustrates the construction.

\begin{figure}
	\centering
	\scalebox{1.7}{\includegraphics{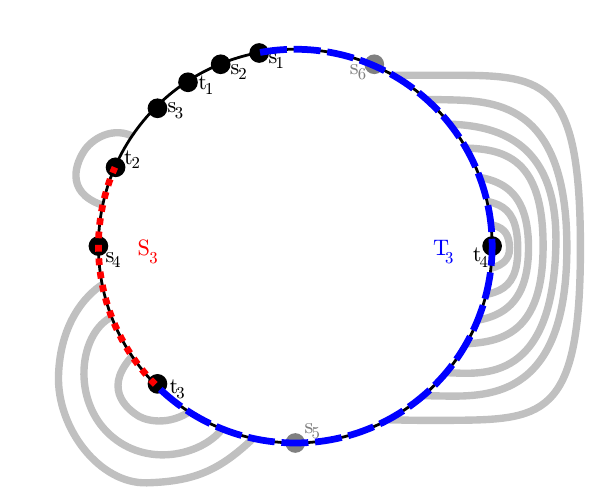}}
	\caption{The construction of $X_4, (s_1, t_1), \ldots, (s_4, t_4)$, for the topological $\DPP$.\label{Disc-with-edges} Note that $s_5$ and $s_6$ are only used to place $E_4$ correctly.}
\end{figure}

\begin{definition}[$X_k$]\label{dfn:Disc-with-edges}
	%For $k \in \mathbb{N} \setminus \{0\}$ a disc-with-edges is a subset $X_k$ of the plane containing a closed disc $D$ such that the connected components of $X_k \setminus D $ are homeomorphic to open intervals $(0, 1)$. We call them \emph{edges}. More precisely we define $X_k$ inductively as follows.

	Let $D$ be a closed disc in the plane and $k \in \mathbb{N} \setminus \{0\}$.
	We start by inductively defining points $s_k$, $t_k$ on the boundary $\partial D$ of $D$. (These will be used as terminals and to confine the way the edges are added to $D$.)
	Let $s_1,t_1$ be two distinct points on $\partial D$, and let 
	$C_1\coloneqq \partial D\setminus\{s_1,t_1\}$. Hence $C_1$ is the union of two curves, each homeomorphic to the open interval $(0, 1)$. Call one of the curves $S_1$ and the other $T_1$.
	Assume that $s_k$, $t_k$, $C_k$, $S_k$, and $T_k$ are already defined, and assume that $T_k$ is a curve adjacent to $t_k$ and $s_1$.
	Place a new point $s_{k+1}$ on $S_k$ and a new point $t_{k+1}$ on $T_k$, let $C_{k+1}\coloneqq C_k\setminus\{s_{k+1},t_{k+1}\}$, let $T_{k+1}$ be the component of $C_{k+1}$ adjacent to $t_{k+1}$ and $s_1$, and let $S_{k+1}$ be the component of $C_{k+1}$ adjacent to $t_{k+1}$ and $t_k$.

%	We choose two different points $s_1$ and $t_1$ on the boundary $\partial D$ of $D$.
%	We first define points $s_{k + 1}, t_{k + 1}$ on $\partial D$ inductively. The boundary with terminals removed $C_k \coloneqq \partial D \setminus \left\{s_i, t_i \middle| i \in  [k]\right\}$ is the union of curves homeomorphic to $]0, 1[$. Let $T_{k + 1}$ be a curve adjacent to $t_k$ and $s_1$ (which always exists). Let $S_{k + 1}$ be the other curve adjacent to $t_k$. We place $t_{k + 1}$ on $T_{k + 1}$ and $s_{k + 1}$ on $S_{k + 1}$.
%
%Now all $s_i, t_i, C_k$ for $i \in \mathbb{N} \setminus \{0\}$ are defined. 

	Now let $X_1 \coloneqq D$ and $E_1\coloneqq \emptyset$. Assume the space $X_k$ and the set $E_k$ are already defined. We define $X_{k+1}$ by adding a planar matching of $2^k - 1$ edges to $X_k$. We call the set of these edges $E_{k + 1}$. The edges are pairwise disjoint and disjoint from $X_k$. They are added such that each end is adjacent to a point on $\partial D$ and no two edges are adjacent to the same point on $\partial D$. Each edge has one end adjacent to a point on the component of $C_{k + 2}$ between $t_k$ and $s_{k + 1}$, and the other end adjacent to a point on the component of $C_{k + 2}$ between $t_k$ and $s_{k + 2}$. Finally, let $X_{k + 1}\coloneqq X_k\cup E_{k + 1}$.

	In this way we obtain a family $X_k$, $(s_1, t_1), (s_2, t_2), \ldots, (s_k, t_k)$ of instances to the topological DPP.
\end{definition}

\begin{figure}
	\centering
	\scalebox{1.7}{\includegraphics{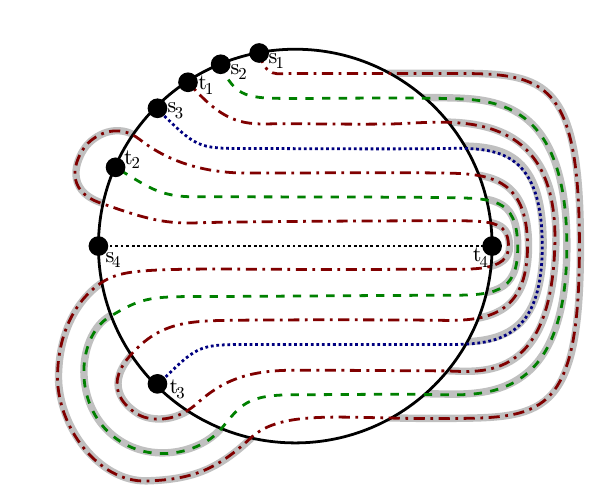}}
	\caption{A solution of the topological $\DPP$ from Figure~\ref{Disc-with-edges}.\label{ProofG}}
\end{figure}

\begin{remark}\label{rem:top-linkage}
	The embedding of $G_{k}$ (as shown in Figure~\ref{ExGridS} for $G_4$) corresponds to the space $X_k$. Thus by Remark~\ref{rem:linkage} the topological $\DPP$ on $X_k$, $(s_1, t_1), (s_2, t_2), \ldots,$ $(s_k, t_k)$ has a solution.
\end{remark}

For an instance of the topological $\DPP$ on $X_4$, this solution can be seen in Figure~\ref{ProofG}.

\begin{lemma}\label{lem:uniqueness}
       For $k \in \mathbb{N} \setminus \{0\}$ the topological $\DPP$ instance $X_k, (s_1, t_1), \ldots, (s_k, t_k)$ has a unique solution $P_1, \ldots, P_k$ (up to homeomorphism). The solution uses all edges $\bigcup_{1\leq i \leq k}E_i$.
\end{lemma}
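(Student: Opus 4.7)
I would prove the lemma by induction on $k$. The base case $k=1$ is immediate: $X_1=D$ is a disc with two boundary terminals $s_1,t_1$, the set $E_1$ is empty, and any simple arc in $D$ from $s_1$ to $t_1$ is a solution, all such arcs being homeomorphic.

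For the inductive step on $X_{k+1}$, existence is Remark~\ref{rem:top-linkage}, and inspection of the canonical solution (the one coming from the embedding of $G_{k+1}$) shows that it traverses every edge of $\bigcup_{j\le k+1}E_j$; so both parts of the lemma reduce to uniqueness up to homeomorphism. Let $P_1,\ldots,P_{k+1}$ be any solution on $X_{k+1}$. My plan is first to show that none of $P_1,\ldots,P_k$ traverses any edge of $E_{k+1}$, whence these paths lie inside $X_k\subseteq X_{k+1}$ and the induction hypothesis pins them down up to homeomorphism; and then to show that in the complement the path $P_{k+1}$ is forced to traverse every edge of $E_{k+1}$ in a canonical zigzag, pinning $P_{k+1}$ down as well.

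To carry out both steps, I would restrict each $P_i$ to $D$ and view the result as a collection of pairwise disjoint simple arcs in $D$ whose endpoints lie on $\partial D$. Because $D$ is a disc, disjointness of such arcs is equivalent to non-interleaving of their endpoint pairs in the cyclic order on $\partial D$. Unrolling the inductive definition of $s_j,t_j,S_j,T_j$ shows that $(s_{k+1},t_{k+1})$ interleaves $(s_k,t_k)$ in this cyclic order but no earlier pair, and that all $2(2^k-1)$ endpoints of the edges in $E_{k+1}$ lie in the two sub-arcs of $\partial D$ flanking $t_k$. Any excursion of a $P_i$ with $i\le k$ through an edge of $E_{k+1}$ would introduce a sub-chord in $D$ near $t_k$ interleaving with the chord of $P_k$ from $s_k$ to $t_k$, contradicting disjointness; this rules out such excursions. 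Conversely, $P_{k+1}$ must exit $D$ often enough to bypass $t_k$ past $P_k$, and the only way to match all endpoints of $E_{k+1}$ as non-crossing chords consistent with the other paths is the nested zigzag pattern that uses every edge of $E_{k+1}$ exactly once.

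The main obstacle I anticipate is the cyclic-order bookkeeping. One must verify that the recursive placement of terminals and edges really does produce the claimed interleaving pattern, and that the $2^k-1$ edges of $E_{k+1}$, stacked like nested parentheses around $t_k$, admit exactly one non-crossing matching of chord-endpoints in $D$ compatible with the disjointness of all of $P_1,\ldots,P_{k+1}$. These are combinatorial statements about non-crossing matchings on a cyclically ordered finite set, and extracting them from the inductive construction of $X_k$ requires some patient notation, but no deep new ideas.
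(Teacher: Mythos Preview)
Your inductive plan rests on a claim that is false: you assert that in any solution on $X_{k+1}$ the paths $P_1,\ldots,P_k$ avoid all edges of $E_{k+1}$ while $P_{k+1}$ traverses them in a zigzag. The canonical solution coming from $G_{k+1}$ already refutes this. In the construction of $G_{k+1}$ the newly added row $y_1,\ldots,y_p$ \emph{is} $P_{k+1}$, and it lies entirely inside the grid (hence inside $D$); it uses no external edge whatsoever. It is the \emph{old} paths $P_1,\ldots,P_k$ that, because of the folding, are forced through every edge of $E_{k+1}$. You can check this by hand for $k=1$: in $G_2$ the unique external edge is $x_3x_4$, and it is $P_1=x_1x_2x_3x_4x_5x_6$ that uses it, while $P_2=y_1y_2y_3$ stays inside the grid. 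Your interleaving argument (``an excursion of some $P_i$ with $i\le k$ through $E_{k+1}$ would cross the chord of $P_k$ from $s_k$ to $t_k$'') tacitly assumes $P_k$ is a single chord in $D$; but $P_k$ is already broken into many sub-arcs by the edges of $E_2,\ldots,E_k$, and in fact $P_k$ is one of the paths that must use edges of $E_{k+1}$.

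The paper's induction goes in the opposite direction. Rather than restricting $P_1,\ldots,P_k$ to $X_k\subseteq X_{k+1}$, it \emph{enlarges} $X_{k+1}$ by filling in the crescents bounded by the edges of $E_{k+1}$, obtaining a space $X'_k$ homeomorphic to $X_k$; the paths $P_1,\ldots,P_k$ are then a solution on $X'_k$, and the inductive hypothesis fixes their combinatorics. The key point is that each resulting sub-arc of $\bigcup_{i\le k}P_i$ inside $D$ has its endpoints interleaving with $(s_{k+1},t_{k+1})$, so if such a sub-arc avoided $E_{k+1}$ it would separate $s_{k+1}$ from $t_{k+1}$ in $X_{k+1}$ and block $P_{k+1}$. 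A counting argument then shows each sub-arc uses exactly one edge of $E_{k+1}$, and the nesting determines which. Your proposal cannot be salvaged by more careful cyclic-order bookkeeping; the decomposition itself is the wrong way round.
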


\begin{figure}
	\centering
	\scalebox{1.7}{\includegraphics{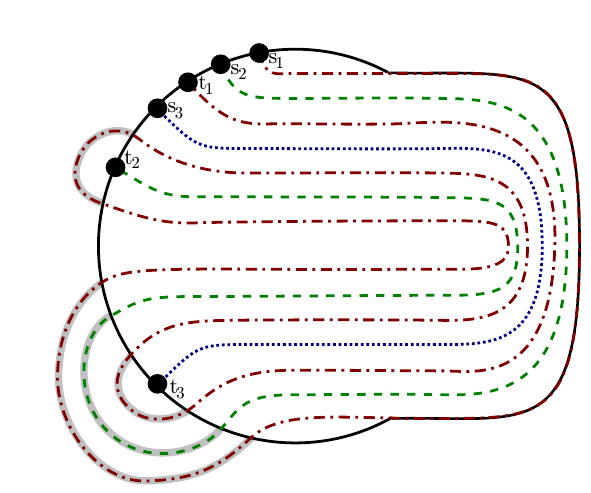}}
	\caption{The solution on $X'_3, (s_1, t_1), \ldots, (s_3, t_3)$ induced by the solution on $X_4, (s_1, t_1), \ldots, (s_4, t_4)$.\label{ProofG-ind}}
\end{figure}

\begin{proof}
	For $k = 1$ this is true because $E_1 = \emptyset$. Inductively assume that the lemma holds for $k$.
Let $ P_1, \ldots,  P_{k + 1}$ be any solution to  $X_{k +
1}, (s_1, t_1), \ldots$, $(s_{k + 1}, t_{k + 1})$. This solution induces a
solution of the topological $\DPP$ $X_k, (s_1, t_1)$, $\ldots$, $(s_k, t_k)$ as
follows. Every edge $e \in E_{k + 1}$ together with the segment of $\partial
D$ that connects the ends of $e$ and contains $t_{k + 1}$ bounds a disc $D_e$.
	The space $X'_k\coloneqq X_{k+1} \cup \bigcup_{e \in E_{k + 1}}D_e$ is
homeomorphic to $X_k$ and the paths $ P_1, \ldots,  P_k$ 
form a solution of $X'_k,$ $(s_1, t_1)$, $\ldots$, $(s_k, t_k)$. 
Figure~\ref{ProofG-ind} illustrates this for $k = 3$.
%
%In particular, starting from the solution $P_1, \ldots, P_{k + 1}$ to  $X_{k +
%1}$, $(s_1, t_1)$, $\ldots$, $(s_{k + 1}, t_{k + 1})$ obtained by embedding
%$G_{k+1}$,
%we thus obtain a solution $P_1, \ldots, P_k$ to $X'_k, (s_1, t_1), \ldots, (s_k,
%	t_k)$ 
By induction, this solution is unique up to homeomorphism 
and the paths $ P_1, \ldots,  P_k$ use all edges in 
	$\bigcup_{1\leq i\leq k}E_i$. Let $Q_1,\ldots, Q_k$ be the solution
obtained by embedding the graph $G_{k}$ (cf.~Remark~\ref{rem:top-linkage}).
By uniqueness, for each $i\in[k]$, the edges of $\bigcup_{1\leq
i\leq k}E_i$ used by $P_i$ are the same as for $ Q_i$, 
and the order of their
appearance on $P_i$ when walking from $s_i$ to $t_i$ is also the same as on $Q_i$.
Hence the solution $ P_1, \ldots,  P_k$ on $X'_k$ restricted to the closed disc $D$ of $X'_k$ 
is a planar matching of curves (the curves in $\bigcup_{1 \leq i \leq k} P_i \setminus
	\bigcup_{1 \leq i \leq k} E_i$) between pairs of points on $\partial D$ (and the same pairs
	of points are obtained by restricting $Q_1, \ldots, Q_k$ to $D$). These pairs of points also have to
be matched in $X_{k+1}$.

We now claim that in the solution $ P_1, \ldots,  P_{k + 1}$ on $X_{k+1}$, 
	each curve in $\bigcup_{1 \leq i \leq k} P_i \setminus
	\bigcup_{1 \leq i \leq k} E_i$ uses an edge of $E_{k + 1}$. 
	If not, then there is a curve \[p\in\bigcup_{1 \leq i \leq k} P_i \setminus \bigcup_{1 \leq i \leq k} E_i\] that avoids all edges in $E_{k + 1}$.
Since the edges of $\bigcup_{1\leq i\leq k}E_i$ are already used, $p$ is
routed within $D$. By construction of $X_{k+1}$ and the fact that all
edges of $\bigcup_{1\leq i\leq k}E_i$ are already used, 
this means that $p$ separates $s_{k + 1}$ from both $t_{k + 1}$ and 
the endpoints of the edges in $E_{k+1}$, 
a contradiction to $ P_{k + 1}$ being a path in the 
solution. Hence $p$ uses an edge of $E_{k + 1}$.

Since the sets $\bigcup_{1 \leq i \leq k} P_i \setminus \bigcup_{1 \leq i \leq
k} E_i$ and $E_{k + 1}$ have equal size, it follows that
	each curve of the matching \[\bigcup_{1 \leq i \leq k} P_i \setminus
	\bigcup_{1 \leq i \leq k} E_i\] uses precisely one edge of $E_{k + 1}$.
Since the endpoints of the matching are fixed, they induce an order on the matching
curves which determines precisely which edge of $E_{k + 1}$ is
used by which curve. 

	Altogether, this shows that the solution to 
$X_{k +1}, (s_1, t_1), \ldots, (s_{k + 1}, t_{k + 1})$ is unique up to homeomorphism and
	uses all edges $\bigcup_{1\leq i \leq k}E_i$.

\end{proof}

\begin{figure}
	\centering
	\scalebox{1.7}{\includegraphics{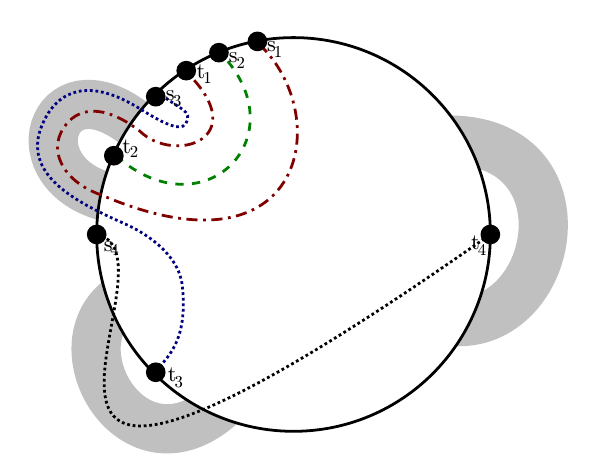}}
	\caption{The number of edges around the terminals is crucial (cf.\ Remark~\ref{rem:no-of-edges}).\label{NoProof}}
\end{figure}

\begin{remark}\label{rem:no-of-edges}
	In a topological $\DPP$ instance, the number of edges around the terminals is crucial. 
	Even just relaxing the conditions on $X_k$ by having 2 edges instead of 1 edge around terminal $t_2$ allows a quite different solution to the topological $\DPP$.
	This solution uses no edge around $t_k$, one edge around each of $t_3, t_3, \ldots, t_{k - 1}$, and the two edges around $t_2$ (Figure~\ref{NoProof}
	shows this for $k=4$).
\end{remark}

\begin{theorem}\label{theo:main}
	Let $k \in \mathbb{N} \smallsetminus \{0\}$. The graph $G_k$ contains a vital linkage.
\end{theorem}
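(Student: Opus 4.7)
My plan is to transfer the topological uniqueness of Lemma~\ref{lem:uniqueness} to combinatorial uniqueness in $G_k$ via the embedding correspondence of Remark~\ref{rem:top-linkage}. By Remark~\ref{rem:linkage} there already exists a linkage $L$ in $G_k$ with pattern $\{\{s_i,t_i\}:i\in[k]\}$; what I must verify is that $L$ covers every vertex of $G_k$ and that no other linkage with this pattern exists.

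First I would establish $V(L)=V(G_k)$ by induction on $k$ along the construction of $G_{k,p}$. For $G_{1,p}$ the linkage is the entire path and the claim is immediate. In the inductive step from $G_{k,2p}$ to $G_{k+1,p}$, I would take the newly added path $y_1,\dots,y_p$ as the $(k{+}1)$-st component of the linkage and retain the linkage inherited inductively from $G_{k,2p}$; the two together cover the old vertices (by induction) and all the new vertices.

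Next, to rule out other linkages, I would let $L'$ be any linkage in $G_k$ with the same pattern and view its paths, via the planar embedding of $G_k$, as $k$ pairwise disjoint simple curves in $X_k$ joining the terminal pairs. This is a topological solution of the topological $\DPP$ on $X_k$, so by Lemma~\ref{lem:uniqueness} it is unique up to homeomorphism and uses every edge of $\bigcup_{i\leq k}E_i$. Using every such edge forces $L'$ to traverse the same graph-edges outside the disc $D$ as $L$ does, while the fixed homeomorphism type inside $D$ leaves no alternative combinatorial routing available, so $L'=L$.

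The main obstacle I foresee is converting ``unique up to homeomorphism in $X_k$'' into ``unique as a subgraph of $G_k$''. This requires a careful check of the correspondence between the planar embedding of $G_k$ and the disc-with-edges $X_k$: in particular, that the non-terminal vertices of $G_k$ all lie on arcs of $\bigcup_{i\leq k}E_i$, so that the ``using all edges'' conclusion of Lemma~\ref{lem:uniqueness} forces every such vertex into $V(L')$; and that the graph-structure of $G_k$ within the disc $D$ is rigid enough that no homeomorphism-equivalent curve corresponds to a combinatorially different path. This bookkeeping should be manageable by tracing the constructions of $G_k$ and $X_k$ in lockstep, paralleling the induction already carried out in Lemma~\ref{lem:uniqueness}.
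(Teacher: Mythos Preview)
Your overall strategy---pass to the topological instance $X_k$ and invoke Lemma~\ref{lem:uniqueness}---is the same as the paper's, and your coverage argument $V(L)=V(G_k)$ is fine. The gap is precisely where you feared: the passage from ``unique up to homeomorphism in $X_k$'' to ``unique as a subgraph of $G_k$''. This is \emph{not} bookkeeping. Two combinatorially distinct paths in the $(2^k-1)\times(2^k-1)$-grid that join the same pair of boundary vertices realise the same arc in $D$ up to homeomorphism, so topological uniqueness by itself says nothing about which grid edges are used. Your sentence ``the fixed homeomorphism type inside $D$ leaves no alternative combinatorial routing available'' is therefore unjustified, and your proposed remedy---checking that ``the graph-structure of $G_k$ within the disc $D$ is rigid enough''---would fail: the grid is anything but rigid in this sense.

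What the paper actually does is use Lemma~\ref{lem:uniqueness} only to pin down the behaviour \emph{outside} $D$: it forces any linkage $L'$ to use all the outer edges, hence to meet each column boundary of the grid in the same $2^k-1$ vertices as $L$. From there a separate, purely combinatorial argument finishes: one has $2^k-1$ vertex-disjoint subpaths crossing the grid from the first to the last column, and if any of them uses a vertical edge then some column has at most $2^k-3$ vertices free for the other $2^k-2$ subpaths, contradicting Menger's theorem. That counting step is the missing idea in your proposal.

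Two smaller points. First, your claim that the non-terminal vertices of $G_k$ lie on arcs of $\bigcup_i E_i$ is incorrect: in fact $|V(G_k)|=(2^k-1)^2$, so every vertex of $G_k$ is a grid vertex inside $D$; the outer ``edges'' of $X_k$ carry no vertices at all. Second, to transport a combinatorial linkage $L'$ into a topological solution on $X_k$ you need to know that the planar embedding of $G_k$ is essentially unique; the paper argues this via $3$-connectivity (after contracting one edge) and Whitney's theorem, a step you should not omit.
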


\begin{proof}
	Let $P_1, \ldots, P_k$ be the linkage from Remark~\ref{rem:linkage}. We argue that it is vital.
	For $k = 1$ and $k = 2$, one can easily verify that $G_k$ has a unique embedding. For $k \geq 2$, contracting an edge at $s_1$ suffices to make $G_k$ $3$-connected. Since 3-connected planar graphs have unique embeddings~\cite{Whitney1932}, the graph $G_k$ also has a unique embedding, and it suffices to consider our previous embedding of $G_k$ (cf.~Figure~\ref{ExGridS}). Let $D$ be the minimal disc containing the grid in $G_k$. The disc $D$ together with $E(G_k)$ is the space $X_k$. The paths $P_1, \ldots, P_k$ thus give a solution to the topological $\DPP$ instance $X_k, (s_1, t_1), \ldots, (s_k, t_k)$, which by Lemma~\ref{lem:uniqueness} is unique and uses all edges in $E_k$.
	Thus any linkage $P_1', \ldots, P_k'$ with the same pattern as $P_1, \ldots, P_k$ can differ from $P_1, \ldots, P_k$ only inside the grid. Thus for each $y \in [2^k - 1]$ there is a subpath $Q_y'$ of some path of the solution $P_1', \ldots, P_k'$, such that the endpoints of $Q_y'$ are $(y', 1)$ and $(y, 2^k - 1)$ for some $y' \in [2^k - 1]$. Hence the family $(Q'_y)_{y \in [2^k - 1]}$ is a
	linkage between the first column and the last column of the grid.

	Suppose that $P_1', \ldots, P_k'$ indeed differs from $P_1, \ldots, P_k$. 
	Then at least one path $Q_y'$ contains a vertical edge $e$ in the grid.
	Hence the column of $e$ contains at most $2^k-3$ vertices that are not used by $Q_y'$ and, by Menger's Theorem~\cite{Menger1927}, the remaining $2^k-2$ paths of the family cannot be routed, a contradiction.

%	Suppose that $P_1', \ldots, P_k'$ indeed differs from $P_1, \ldots, P_k$. 
%	Then at least one path $Q_y'$ contains a vertical edge in the grid, i.\,e.\ an edge $e = \{(y, x), (y + 1, x)\}$ for some $x \in [2^k - 1]$.
%	Let $G_k'$ be the subgraph of $G_k$ induced by $V(G_k) \setminus e$.
%	In $G_k'$, the set $S \coloneqq \left\{(y', x)\ \middle|\ y' \in [2^k - 1] \setminus \{y, y + 1\}\right\}$ separates $X \coloneqq \left\{(y', 1)\ \middle|\ y' \in [2^k - 1]\right\}$ from $Y \coloneqq \left\{(y', 2^k - 1)\ \middle|\ y' \in [2^k - 1]\right\}$. The paths $Q_{y'}', y' \in [2^k - 1] \setminus \{y\}$ are pairwise vertex-disjoint paths between $X$ and $Y$, and there are $2^k - 2$ such paths. But $|S| = 2^k - 3$, a contradiction to Menger's Theorem~\cite{Menger1927}.
\end{proof}

\emph{Proof of Theorem~\ref{thm:lowerbound}}
Theorem~\ref{thm:lowerbound} immediately follows from Theorem~\ref{theo:main} and Remark~\ref{rem:tree-width}.
\qed

\section*{Acknowledgements}
\small{This research was partially supported by the Deutsche Forschungsgemeinschaft, project Graphstrukturtheorie und algorithmische Anwendungen, AD 411/1-1 and project Graphstrukturtheorie im {\"U}bersetzerbau, KR 4970/1-1.
We thank Fr{\'e}d{\'e}ric Mazoit for valuable discussions, especially for inspiring Remark~\ref{rem:no-of-edges}.
We also thank an anonymous reviewer for suggesting very elegant shortenings of our construction and proof.}

\vspace{3mm}

\bibliographystyle{plain}
\bibliography{Lower}

\end{document}